\newcommand{\remove}[1]{}
\newtheorem{lemma}{Lemma}[section]
\newtheorem{theorem}{Theorem}[section]
\newtheorem{corollary}{Corollary}[section]
\newtheorem{definition}{Definition}[section]
\newcommand{\email}[1]{E-mail: #1}
\newcounter{pclaim} 
\newcommand{\qedsymb}{\hfill{\rule{2mm}{2mm}}}
\newenvironment{proof}{\begin{trivlist}
\item[\hspace{\labelsep}{\bf\noindent Proof: }]
}{\qedsymb\end{trivlist}}
\begin{document}

\title{\LARGE\bf Improved lower bound for deterministic broadcasting in radio networks}

\author{
\sc{Carlos Fisch Brito} \thanks{Departamento de Computa\c{c}\~ao, Universidade Federal do Cear\'a, Fortaleza, Brazil.  \email{carlos@lia.ufc.br}. Research in this work was conducted while the author was a graduate student at UCLA.}
\and \sc{Shailesh Vaya} \thanks{Department of Computer Science and Engineering, Indian Institute of Technology Madras, Chennai, India - 600036. \email{vaya@cse.iitm.ernet.in}. Research in this work was conducted while the author was a graduate student at UCLA.}
}

\maketitle


\abstract
{
  We consider the problem of deterministic broadcasting in radio networks when the nodes have limited knowledge about the topology of the network. We show that for every deterministic broadcasting protocol there exists a network, of radius $2$, for which the protocol takes at least $\Omega(n^{\frac{1}{2}})$ rounds for completing the broadcast. Our argument can be extended to prove a lower bound of $\Omega((nD)^{\frac{1}{2}})$ rounds for broadcasting in radio networks of radius $D$. This resolves one of the open problems posed in \cite{KP02}, where in the authors proved a lower bound of $\Omega(n^{\frac{1}{4}})$ rounds for broadcasting in constant diameter networks.

  We prove the new lower bound for a special family of radius $2$ networks. Each network of this family consists of $O(\sqrt{n})$ components which are connected to each other via only the source node. At the heart of the proof is a novel simulation argument, which essentially says that any arbitrarily complicated strategy of the source node can be simulated by the nodes of the networks, if the source node just transmits partial topological knowledge about some component instead of arbitrary complicated messages. To the best of our knowledge this type of simulation argument is novel and may be useful in further improving the lower bound or may find use in other applications.
}

\noindent {\bf Keywords:} radio networks, deterministic broadcast, lower bound, advice string, simulation, selective families, limited topological knowledge.

\newpage
\pagenumbering{arabic}

\section{Introduction}
\label{sec:introduction}

  One of the most fundamental and well studied problems in distributed computing is the problem of broadcasting. In broadcasting there is a {\it source node} which possesses a message that should be sent to all the remaining nodes of the network. If the source node is directly connected to all the nodes of the network then the message can be sent to the rest of the nodes in a single transmission. What makes the problem of broadcasting non-trivial is that the source node may not be directly connected to the rest of the network and the topology of the network may be arbitrary. The minimum requirement on the topology of a network so that broadcasting can be completed is that the network is connected. There are several metrics to measure the complexity of a broadcasting protocol, like round complexity (minimum number of rounds needed to complete broadcasting), message complexity (minimum number of messages needed to be sent to complete broadcasting) etc., which can be defined in terms of the number of nodes in the network $n$, the radius of the network $D$, etc. The most important and prevalent complexity measure under which the broadcasting problem has been studied is the {\it round complexity} of the broadcasting protocol. This work focuses on the round complexity of deterministic broadcasting protocols in special kinds of networks called {\it radio networks}.

{\bf Ad-hoc radio networks.}
  Broadcasting protocols find many applications in {\it Ad-hoc wireless} networks. Ad-hoc wireless networks are used in scenarios like battlefields, emergency disaster reliefs and other situations where there is no infrastructure for communication networks. Sensor networks are also an example of wireless radio networks. Unlike the traditional wireless networks these networks do not have a base station to which the nodes can communicate. Nodes which are within the range of their radio signals communicate via radio transmissions, while nodes that are far off rely on other nodes of the network to exchange messages.

  Communication in these networks is structured using synchronous time-slots. In every round each node either acts as a transmitter or as a receiver. A radio network can be modeled as an {\em undirected} connected graph as follows: Each node in the graph represents a processor, and two nodes are connected by an edge if the corresponding processors lie within the transmission range of each other. A message transmitted by a node can potentially reach all its neighbors.  However, if more than one neighboring node send a message in the same round, then a collision occurs. This is the model that was proposed in the seminal paper on radio networks \cite{BGI}, and has been generally considered in the literature \cite{KP02}. In \cite{BGIE}, the authors adopt a more pessimistic model for radio transmission, according to which when two or more neighboring nodes transmit a message in the same round, the receiving node receives the message from one of them and the messages from others are lost. The model considered in this work is the one traditionally considered in the literature on radio broadcasting, i.e. as stated in \cite{BGI}, \cite{KP02}, \cite{KP04}.

\subsection{Related Work}
\label{sec:related-work}
  The study of broadcasting in radio networks was initiated by Bar-Yehuda, Goldreich and Itai in \cite{BGI}. In \cite{KP02}, Kowalski and Pelc establish a lower bound of $\Omega (n^{1/4})$ rounds for the deterministic broadcasting problem on radio networks of diameter $4$ (and $(nD^3)^{1/4}$ rounds for networks of diameter $D$) in the model considered traditionally, i.e., which is the model stated in \cite{BGI}.

  When the topology of the radio network is known to all the nodes of the network, a deterministic broadcasting protocol of $O(D \lg_2^2 n)$ was given in \cite{CW}, for networks of radius $D$. This centralized broadcasting protocol has recently been improved to $O(D \lg_2 n + \lg_2^2 n)$ rounds by \cite{KP04}. A protocol running in $O(D + \lg^5_2 n)$ rounds was given in \cite{GM} and this has been most recently improved to $O(D + \lg_2^4 n)$ in \cite{EK}. A lower bound of $\Omega(\lg_2 n)$ rounds has been proved in \cite{ANLP}, for the centralized setting.

  The setting where the nodes are given only their own labels, but not the labels of their neighboring nodes has also been studied quite extensively. Research in this problem has led to the introduction of very interesting combinatorial concepts like {\it selective families}.  The use of selective families in the design of deterministic protocols for unknown networks was introduced by Chlebus et. al. in \cite{CGGPR}. Several recent works exploit this combinatorial tool, specifically the use of probabilistic method, for obtaining good lower and upper bounds for the broadcasting problem \cite{CMS}, \cite{CGGPR}, \cite{CGR}.

  The problem of broadcasting on directed radio networks has received much attention. The protocol given by \cite{CGGPR} requires $O(n^2)$ rounds for completion. This upper bound was reduced to $O(n^\frac{3}{2})$ rounds by a breakthrough result of \cite{CGOR}. In a further breakthrough, \cite{CGR}, the authors deployed the probabilistic method to bring the upper bound from $O(n \lg^2_2 n)$ to within logarithmic factors of the best known lower bound $\Omega(n \lg_2 D)$. In \cite{KP02}, a further improvement has been made for small diameter networks. They establish an upper bound of $O(n \lg_2 n \lg_2 D)$. In \cite{CR}, the authors define and deploy new combinatorial structures to reduce this gap between lower and upper bounds to $O(\lg_2 D)$ factor i.e., give a broadcasting protocol running in $O(n \lg^2_2 D)$ rounds.

  Randomized protocols for the broadcasting problem have been studied in \cite{BGI}, \cite{KP3}, \cite{KM}. In these protocols, the nodes do not know the labels of their neighboring nodes and may in fact have non-unique labels. In \cite{BGI}, a broadcasting protocol running in $O(D \lg_2 n + \lg_2^2 n)$ rounds is given. A lower bound of $\Omega(D \lg_2(N/D))$ rounds was proved for this problem in \cite{KM}. The lower bound of $\Omega(\lg_2^2 n)$ rounds for broadcasting protocols also holds for this problem. A broadcasting protocol running in $O(D \lg_2 (n/D) + \lg_2^n)$, matching the lower bound, was proposed in \cite{KP3}.

  In the wake up problem, each node in the network either wakes-up spontaneously or is activated by receiving a wake-up signal from another node. Each active node transmits the wake-up signal according to a given protocol. The running time of a wake-up protocol is the number of steps counted from the first spontaneous wake-up, until all nodes become activated. This problem has been studied under different assumptions in \cite{CGK}, \cite{CK}, \cite{GPP}, \cite{JS}. Asynchronous radio broadcasting has been studied in \cite{CRM}.  

  In the next section, we shall present an outline of the proof of the lower bound. The rest of the paper shall be devoted to formalizing this outline.

\section{Some early developments and outline of the Proof}

  The study of broadcasting in radio networks was initiated by Bar-Yehuda, Goldreich and Itai in \cite{BGI}, where they constructed a class of networks, referred to as simple BGI-networks (see Figure \ref{fig:bgi-c2}(a)), for which they presented a lower bound for deterministic broadcasting. In \cite{KP02}, Kowalski and Pelc observed that for simple BGI-networks the role of the source node cannot be ignored in achieving faster broadcast. They go on to describe a deterministic protocol in which the source assists other nodes in the network to complete broadcast on any BGI-network in $O(\lg_2 n)$ rounds. 

\begin{figure}[h]
   \centerline{
      \epsfig{file=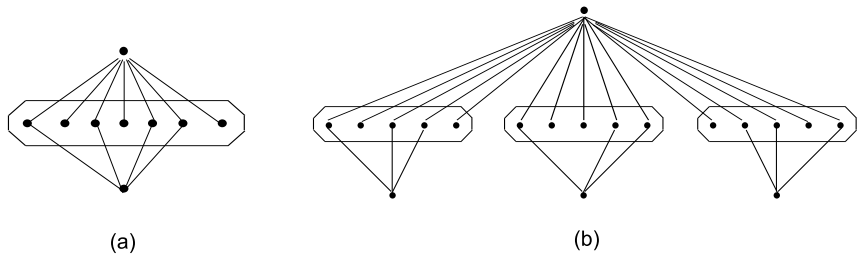, scale=0.5}
   }
   \caption{(a) BGI-network and (b) C2 network}
   \label{fig:bgi-c2}
\end{figure}

  For the result in \cite{BV03}, \cite{BGV04}, we introduced a more elaborate class of networks, (see Figure \ref{fig:bgi-c2}(b)) for which we asked the questions: "If the broadcast protocol runs for at most $r$ rounds, how much assistance can the source node provide? Can this assistance be quantified in terms of the number of rounds, the topology of the network and argued to be insufficient to significantly expedite broadcasting?". 

  The first attempt to formalize this idea was an information theoretic argument presented in \cite{BV03}. Basically it is argued in \cite{BV03} that even if the role of the source node is substituted by an advice string encoding some partial knowledge about the topology of the underlying network in consideration, the assistance is found to be insufficient, information theoretically speaking, to complete broadcast in $\sqrt{n}$ rounds on at least some of the networks. This argument employs a straightforward generalization of a lower bound on the size of selective families, \cite{CMS}. In this generalization we found a technical gap. The generalization is perhaps true for families of subsets with special properties, but we believe that formulating an argument for the lower bound along the original lines would make the proof unnecessarily complicated. Along the same lines an oracle argument, \cite{MOLLE}, was also sketched in \cite{BGV04}. This argument has the conceptual problem to ameliorate which the original information theoretic argument was proposed in \cite{BV03}. Namely, if an oracle provides the identities of the two nodes with least IDs which transmitted in a given round when a collision is detected, then depending on the specification of the protocol the information provided by the oracle can also leak partial knowledge about the topology of the network. For example, it may be the case that the protocol specifies that some nodes in layer $L_1$, with IDs smaller than the ones revealed by the oracle also transmit in this round if connected to the appropriate node in layer $L_2$ (see Figure 2). Then, an interpretation of the oracle message is that these nodes are not connected to the corresponding nodes in $L_2$ in the given network. This knowledge about the topology of the network may be exploited in achieving faster broadcast in future and cannot just be ignored.

  In this work we present a more direct proof of the lower bound, along lines similar to the original information theoretic argument. The proof consists of three main components:\\

  In the first component it is shown through a series of simulation based reductions that, if an arbitrary deterministic broadcast protocol $\pi_0$ completes broadcast in $r$ rounds on every $C_2$ network, then there must exist another deterministic broadcast protocol in which the source is given an advice string at the beginning of the protocol $\pi'$ which the source transmits along with the broadcast message at the beginning of the protocol and remains silent for the rest of the protocol. This advice string consists of $r$ blocks of binary strings each of which encodes one of the following two types of messages: $\phi$ (which denotes a collision or empty message), or a tuplet $<i,C_i>$ where $C_i$ denotes the configuration of the $i^{th}$ BGI-component in the network. The protocol $\pi'$ must complete broadcast in at most $3*r$ rounds on every network of the $C_2$ family. These reductions are presented in Section \ref{sec:reductions}.

  The essence of the second component is to utilize meaningful connections between the messages encoded in the advice string and the topology of the corresponding underlying network. For this we consider the set of all the unique advice strings provided to the source node for different networks from $C_2$ family. We select one such advice string which must be provided to the source node for a large subset of $C_2$ networks. This subset of networks has the property that all the different possible configurations of at least one BGI-component are present in some network of the subset. These arguments are presented in Section \ref{sec:prune}.

  The above two components are combined with a known lower bound on the size of selective families, or alternatively a hitting game argument from \cite{BGI}, to complete the proof of the lower bound, Theorem \ref{thm:main-theorem}.

\subsection{Organization of the rest of the paper}
  The rest of the paper is devoted to formalizing the above argument. Section \ref{sec:model-definitions} describes the model in detail and introduces notations and definitions used in the text. Section \ref{sec:lower-bound} states some auxiliary results and proves the main theorem.  The statement of the reductions and their proofs are presented in Section \ref{sec:reductions}. Section \ref{sec:prune} describes the procedure PRUNE() that chooses an appropriate advice string and a corresponding subset of $C_2$ networks as mentioned above.

\section{Description of the model and general definitions}
\label{sec:model-definitions}

{\bf Definition} \cite{BGI},\cite{KP02}:
   A broadcast protocol $\pi$ for a radio network is a synchronous multiprocessor protocol which proceeds in rounds as follows:\\

\begin{enumerate}
\item
   Nodes have distinct labels from the set $\{0,1,\dots,m\}$, where $m$ is a polynomial on the number of nodes in the network. A distinguished node with label 0 is called the {\it source} node.

\item
   All nodes execute identical copies of the same protocol $\pi$.

\item
   In each round, every node either acts as a transmitter or as a receiver (or is inactive).

\item
   A node receives a message in a specific round if and only if it acts as a receiver 
  and exactly one of its neighbors transmits in that round. Otherwise, it receives $\phi$.
  We assume that the messages are authenticated, that is, when a node receives a message 
  it knows the label of the transmitting node.

\item
   The action of a node in a specific round is determined by

   \begin{enumerate}
   \item
      Initial input, which contains its own label and the labels of its neighbors.
   \item
      Messages received by the node in previous rounds.
   \end{enumerate}

\item
   In round $0$, only the source node transmits the broadcast message.

\item
   Only nodes that have received a message are allowed to transmit. That is, the only 
  {\it spontaneous} transmission is the one by the source in round $0$.

\item
   Broadcast is completed in $r$ rounds if all the nodes receive the source message in one 
  of the rounds $0,1,\dots,r-1$.

\end{enumerate}

%

\subsection{${\cal C}_2$ networks}
\label{subsec:C2networks}

\begin{figure}[h]
   \centerline{
      \epsfig{file=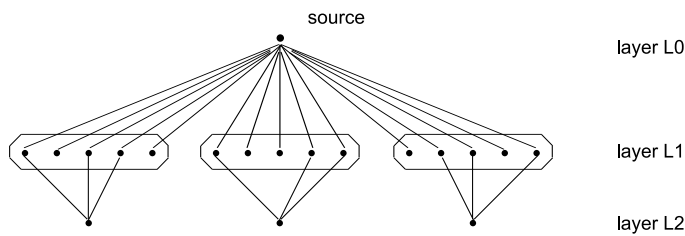, scale=0.5}
   }
   \caption{C2 network}
   \label{fig:c2-network}
\end{figure}

   We prove the lower bound for a family of networks, called $C_2$ networks, with radius $2$ and a simple communication structure. Intuitively, a $C_2$ network is formed by connecting a number of BGI-networks via a common source node (see Figure \ref{fig:c2-network}). More precisely, the nodes of a $C_2$ network are divided into three layers: $L_0$, $L_1$ and $L_2$, where:\\

\begin{itemize}
\item
   layer $L_0$ consists only of the source node;

\item
   layer $L_1$ consists of $\sqrt{n}$ groups of $\sqrt{n}$ nodes each, all of which are connected to the source node.

\item
   layer $L_2$ consists of $\sqrt{n}$ nodes, each one of which is associated with a distinct group of nodes in $L_1$ and connected to an arbitrary subset of nodes of the group.

\end{itemize}

  We call each group of nodes in $L_1$ together with the associated node in $L_2$ a BGI-component, or alternatively a component, of the network.

  The labels of the nodes in a network are arbitrary but fixed for all the $C_2$ networks. Thus, what distinguishes two networks $N$ and $M$ of $C_2$ family is only the topology of one or more BGI-components. Observing that there exists $2^{\sqrt{n}}$ distinct possible topologies for a BGI-component, we can give a complete description of a component by a tuplet $<i,\tau>$, where $i$ is an index for the component which is being described and $\tau$ is an integer in the range $[0 \dots 2^{\sqrt{n}} - 1]$.

\subsection{Advice string $\upsilon(\pi,N,t)$}
\label{subsec:advicearray}

  An advice string $\upsilon(\pi,N,t)$ is an array of $t$ elements, each of which could potentially encode partial topological information about the network $N$. More precisely, each element consists of either $\phi$ or a tuplet $<i,\tau_i>$ which describes the topology of a BGI-component of the network N. The specific contents of the advice string $\upsilon(\pi,N,t)$ depends on the first $t$ rounds of the execution of protocol $\pi$ on network $N$.


\section{Main theorem}
\label{sec:lower-bound}

  In this section we present the main theorem \ref{thm:main-theorem} which proves the lower bound. Theorem \ref{thm:main-theorem} combines the three parts of the argument Lemma \ref{lem:reduction} (simulation based reductions), Lemma \ref{lem:prune} (selection of an advice string and an appropriate subset of networks) and Theorem \ref{thm:sel-family} (a well known lower bound on the size of selective families).

\begin{lemma}
\label{lem:reduction}
  Assume that there exists a protocol $\pi$ that completes broadcast in at most $r$ rounds on every network of the $C_2$ family. Then, there exists a deterministic broadcast protocol $\pi'$ such that

  \begin{itemize}
  \item
     the nodes in layer $L_i$ transmit only in rounds $t \equiv i$ (mod 3), for $i = 0, 1, 2$.

  \item
     for each network $N$ of the $C_2$ family there is an advice string $\upsilon(\pi_3,N,r)$ which is provided to the source node at the set up phase along with the broadcast message.

  \item
     the source transmits the broadcast message and the string $\upsilon(\pi_3, N, 3r)$ in round $0$.
  \item
     the source remains silent in every round $i > 0$.

  \item
     the protocol $\pi'$ completes broadcast on network $N$ in at most $3*r$ rounds
  \end{itemize}
\end{lemma}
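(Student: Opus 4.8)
The plan is to construct $\pi'$ from $\pi$ by a short chain of simulation-based reductions, the middle one carrying all the weight.

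\textbf{Step 1: layering.} First I would triple the clock: define an intermediate protocol $\pi_3$ in which round $t$ of $\pi$ is simulated by the three rounds $3t, 3t+1, 3t+2$, with the source transmitting only in rounds $\equiv 0$, the nodes of $L_1$ only in rounds $\equiv 1$, and those of $L_2$ only in rounds $\equiv 2 \pmod 3$. Every node can decide its own layer from its initial input (the source has label $0$; an $L_1$ node has $0$ among its neighbours; an $L_2$ node does not), so the schedule is implementable. Since messages are authenticated, a node that has observed the three sub-rounds $3t, 3t+1, 3t+2$ knows exactly which of its neighbours transmitted in each, sorted by layer, and can therefore reconstruct the message (or $\phi$) it would have received in round $t$ of $\pi$; the schedule ensures it has done so before it must act in any sub-round of round $t+1$. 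So $\pi_3$ faithfully simulates $\pi$ and completes broadcast in $3r$ rounds. A byproduct worth recording: under this schedule a collision can never occur at an $L_1$ node, whose only neighbours are the source (active only in $\equiv 0$ rounds) and its own $L_2$ node (active only in $\equiv 2$ rounds); the only surviving collisions are at the source (in $\equiv 1$ rounds) and at $L_2$ nodes (in $\equiv 2$ rounds).

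\textbf{Step 2: replace the source's transmissions by an advice string.} This is the crux. In $\pi_3$ the source emits a message $M_{3t}$ in round $3t$, and $M_{3t}$ is a function of the source's history, which is exactly the sequence $R_1, R_4, \dots, R_{3t-2}$ of messages received from $L_1$ (in $\equiv 2$ rounds the source, not being adjacent to $L_2$, hears $\phi$). In a round $3s+1$ the source receives a non-$\phi$ message $R_{3s+1} = \langle v, \mu \rangle$ only if \emph{exactly one} node of $L_1$ transmits network-wide, and that node $v$ then lies in a single BGI-component $i_s$. I would define $\upsilon(\pi_3, N, \cdot)$ to carry, for round $3s+1$, the symbol $\phi$ when $R_{3s+1} = \phi$, and the tuple $\langle i_s, C_{i_s} \rangle$ (index and full configuration of component $i_s$) otherwise; read off the run of $\pi_3$ on $N$, it is a well-defined function of $N$, it is in the required format, and at most its $r$ round-$\equiv 1$ entries are non-$\phi$ (pad with $\phi$ to length $3r$ if one prefers). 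The key claim is that the whole sequence $M_0, M_3, \dots$ is a function of this advice string alone, computable by anyone holding it. The enabling fact is that, given the source's broadcasts, each BGI-component is a closed deterministic system — an $L_1$ node's only neighbours are the source and its $L_2$ node, an $L_2$ node's only neighbours are $L_1$ nodes of its own group — so the internal trace of component $i_s$ through round $3s+1$ is completely determined by $C_{i_s}$ and $M_0, \dots, M_{3s}$. Hence, by induction on $s$: $M_0$ is the broadcast message; given $M_0, \dots, M_{3s}$, the round-$(3s+1)$ entry of the advice tells us whether $R_{3s+1} = \phi$, and if not it supplies $i_s$ and $C_{i_s}$, from which we simulate component $i_s$, recover $v$ and $\mu$, hence $R_{3s+1}$; then $M_{3(s+1)}$ is the known function of $R_1, \dots, R_{3s+1}$. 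What I would emphasize is that this never requires the configuration of any \emph{other} component: to compute $R_{3s+1}$ all that is needed is whether the source collided, which is exactly the bit the advice already records.

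\textbf{Step 3: front-load the advice and conclude.} Finally I would hand $\upsilon(\pi_3, N, \cdot)$ to the source at set-up, have it transmit the broadcast message together with the entire advice string in round $0$, and stay silent thereafter, so every $L_1$ node holds the complete advice after round $0$. Each non-source node then simulates its $\pi_3$ behaviour: its $\pi_3$-view consists of its label and neighbours, the messages $M_{3t}$ it ``receives'' from the source (an $L_1$ node supplies these by the computation of Step 2; an $L_2$ node receives none), and the messages it hears from its component-mates, which are genuinely transmitted because — by a routine induction on the round number — every non-source node in $\pi'$ transmits exactly as in $\pi_3$. Thus $\pi'$ reproduces the run of $\pi_3$ on every $C_2$ network node-for-node and round-for-round: it has the stated layered schedule, the source speaks only in round $0$, and broadcast finishes within $3r$ rounds. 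The one place I expect real difficulty is the reconstruction claim in Step 2: superficially, deciding whether the source collided in round $3s+1$ seems to demand knowing which $L_1$ nodes transmit, hence the configurations of all $\sqrt n$ components, whereas the advice names at most $r$ of them — the resolution being that the advice, being defined from the true execution, already encodes the collision-or-not bit for every round and, on the non-collision rounds, names the single component that must be expanded, and that together with the closed-system property is just barely enough.
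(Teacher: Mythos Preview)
Your proposal is correct and follows essentially the same approach as the paper: layer the schedule by tripling the clock, replace the source's transmissions by component descriptions that suffice to reconstruct them (via the closed-system property of BGI-components), and front-load these descriptions as an advice string sent in round $0$. The only difference is organizational---the paper splits your Step~2 into two intermediate reductions (first have the source merely echo what it heard, then replace the echo by a component description), whereas you argue the reconstructibility of the source's $\pi_3$-messages from the advice directly in one inductive pass; the content is the same.
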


\begin{lemma}
\label{lem:prune}
   Let $\pi'$ be the protocol given by lemma \ref{lem:reduction}, with $r < \frac{\sqrt{n}}{2}$. Then, there exists an advice string $\upsilon$ and a subset of networks $S \subset C_2$ such that

  \begin{enumerate}
  \item
     Protocol $\pi'$ completes broadcast in at most $3*r$ rounds on every network of subset $S$, when the source node is provided with the advice string $\upsilon$ in the set up phase.

  \item
     There exists an index $i$ such that, for each possible topology of a BGI-component, there is a network $N \in S$ whose $i^{th}$ component has exactly this topology.
  \end{enumerate}
\end{lemma}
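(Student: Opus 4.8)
The plan is to run a pruning argument by iterated averaging over the set of all advice strings that protocol $\pi'$ uses on the networks of $C_2$. First I would observe that an advice string $\upsilon(\pi_3,N,3r)$ is an array of length $3r$ (equivalently $r$ nontrivial slots, one per round in which the source "speaks"), each slot being either $\phi$ or a tuplet $\langle i,\tau_i\rangle$ with $i\in\{1,\dots,\sqrt n\}$ and $\tau_i\in[0,2^{\sqrt n}-1]$. The key quantitative point is that although each $\tau_i$ ranges over $2^{\sqrt n}$ values, the \emph{index} $i$ in each slot ranges over only $\sqrt n$ components, so the number of distinct "index patterns" — the sequence of component-indices (or $\phi$) appearing across the $\le r$ slots — is at most $(\sqrt n+1)^r$. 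When $r<\tfrac{\sqrt n}{2}$ this is a subexponential quantity compared with the $2^{\sqrt n}$ topologies we ultimately need to capture for a single component.

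The main step is the following partitioning. Fix any component index $j$. Partition the whole family $C_2$ by the pair (advice-string-restricted-to-everything-except-the-$\tau$-values-that-mention-$j$, topologies of all components other than $j$). Equivalently: group networks $N,M$ together if they induce the same advice string except possibly in those tuplet-entries whose index equals $j$, and if they agree on the topology of every component $\ne j$. Within one such group, the networks differ \emph{only} in the topology of component $j$, and the advice string differs only in the slots that name $j$; there are at most $r$ such slots, so at most $r$ distinct values of $\tau_j$ can be "reported" by the source across the whole group. Hence within a group the advice string takes at most $(2^{\sqrt n})^{\,r}\cdot(\text{bounded factor})$ — more simply, at most one advice string per choice of the (at most $r$) reported $\tau_j$-values — but crucially the number of \emph{networks} in a group is exactly $2^{\sqrt n}$ (one for each topology of component $j$), while the number of \emph{distinct advice strings} appearing in the group is small. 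I would then argue by counting: if for \emph{every} index $j$ and \emph{every} group the single most popular advice string covered strictly fewer than all $2^{\sqrt n}$ topologies of component $j$, we could bound the total number of networks too loosely; instead, the clean way is a direct averaging — since across all of $C_2$ the number of distinct advice strings is at most $\big((\sqrt n+1)\cdot 2^{\sqrt n}\big)^{r}$, there must be one advice string $\upsilon$ used for a set $S$ of at least $|C_2|/\big((\sqrt n+1)2^{\sqrt n}\big)^{r}$ networks, and then one checks that $r<\sqrt n/2$ forces $|S|$ to be large enough that the topologies of component-$j$ occurring in $S$, for the index $j$ that $\upsilon$ "spends the least describing", must be all of $[0,2^{\sqrt n}-1]$.

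Concretely, I would finish as follows. Let $\upsilon$ be the advice string chosen above and $S$ its network set. The advice string $\upsilon$ contains at most $r$ tuplets; by pigeonhole there is a component index $i$ that appears as the index of at most $\lfloor r/\sqrt n\rfloor = 0$ of these tuplets (since $r<\sqrt n$), i.e. there is an index $i$ that $\upsilon$ \emph{never mentions at all}. For that $i$, no constraint in $\upsilon$ refers to the topology of component $i$; combined with the fact that $S$ is obtained purely by fixing the advice string (not by restricting topologies), and that every topology of component $i$ occurs in \emph{some} network of $C_2$ with exactly this advice string (because changing only component $i$'s topology, when $i$ is unmentioned, cannot change $\upsilon$ — here is where I would lean on the structural fact from the reductions that $\upsilon$'s non-$\phi$ entries are determined by what the source transmits, which for an unmentioned component is nothing), every topology of component $i$ appears in $S$. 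This gives both conclusions: part (1) is immediate since $S\subseteq C_2$ and $\pi'$ completes broadcast on all of $C_2$ with its advice string, which for networks in $S$ is $\upsilon$; part (2) holds with the index $i$ just produced.

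The step I expect to be the main obstacle is the claim that fixing $\upsilon$ and then varying only an unmentioned component's topology stays inside $S$ — i.e. that the advice string really is insensitive to components it does not name. This needs the precise guarantee from Lemma \ref{lem:reduction} / the reductions in Section \ref{sec:reductions} that the source's round-$0$ transmission in $\pi'$ is exactly the array of tuplets with no hidden dependence on other components' topologies; if the reduction only guarantees that \emph{some} advice string works rather than that the advice-string map is "local" in this sense, I would instead carry the averaging one component at a time (a sequential halving over the $\sqrt n$ components, spending a factor $(2^{\sqrt n})$-worth of advice budget per component) and stop as soon as the surviving set still realizes all topologies of the not-yet-fixed component, which the bound $r<\sqrt n/2$ is exactly tuned to permit.
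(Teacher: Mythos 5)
Your primary argument has a genuine gap, and you correctly flagged where it sits: the claim that the advice string is insensitive to a component it never mentions is false. The string's $\phi$-entries conflate two different events at the source -- no neighbor transmitted, or a collision occurred -- and the $L_1$ nodes of an ``unmentioned'' component $i$ may well be transmitting in exactly those collision rounds. Changing the topology of component $i$ changes which of its nodes transmit (their behavior depends on whether the $L_2$ node is listed among their neighbors), so it can turn a collision into a unique transmission by some other component $j$ (turning a $\phi$-entry into $\langle j,\tau_j\rangle$) or vice versa, and the perturbation then propagates, since every other component reacts to what the source relays. So ``unmentioned'' does not imply ``irrelevant,'' and the step you identified as the obstacle really does fail. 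Your counting fallback does not repair it: the most popular advice string is only guaranteed a set of roughly $2^{n-r\sqrt n-r\log\sqrt n}$ networks, and a set of that size need not realize all $2^{\sqrt n}$ topologies of any single component (the set of networks in which the first node of every component is attached to its $L_2$ node is far larger and realizes none fully), so size alone cannot force condition (2). The ``sequential halving over components'' sketch is too vague to evaluate and, as stated, does not address the collision-induced nonlocality either.

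The paper's proof supplies precisely the missing mechanism. Procedure Prune() fixes the event sequence round by round, distinguishing three events at the source ($\phi$ = silence, $\rho$ = collision, $\langle i,\tau\rangle$ = unique transmitter) and, crucially, giving priority to collision events when pruning. It then marks, in one fixed surviving network $N$, one component per unique-transmission round and two colliding components per collision round -- at most $2r<\sqrt n$ components in total -- and proves by induction (Lemma~\ref{lem:marking}) that every network agreeing with $N$ on the marked components reproduces the same event sequence, hence the same advice string, hence lies in $S$. The priority for $\rho$ is what closes the induction: if an extra node transmitted in the modified network, the resulting collision would have caused $N$ itself to be purged, a contradiction. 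Varying the topology of any unmarked component then gives condition (2). To salvage your approach you would need to rebuild this structure -- identify, per network, a small set of components (two per collision round) whose topologies pin down the entire event sequence -- since merely fixing the advice string does not decouple the remaining component.
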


\begin{definition}[\cite{CMS}]
\label{def:sel-family}
  Let $[n] = \{1, \dots, n\}$ and let $k \leq n$. A family ${\cal F}$ of subsets of $[n]$ is a $(n,k)$-selective if for every non empty subset $Z$ of $[n]$ such that $|Z| \leq k$, there is a set $F$ in ${\cal F}$ such that $|Z \bigcap F| = 1$.
\end{definition}

\begin{theorem}[\cite{CMS}]
\label{thm:sel-family}
  Let ${\mathcal F}$ be a $(n,k)$-selective family, with $n > 2$ and $2 \leq k \leq n/64$. Then it holds that, $|{\cal F}| \geq \frac{k}{24} \log_2 \frac{n}{k}$.
\end{theorem}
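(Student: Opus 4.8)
\noindent I would prove the contrapositive by the probabilistic method: assuming $|{\cal F}|<\frac{k}{24}\log_2\frac{n}{k}$, I would produce a nonempty $Z\subseteq[n]$ with $|Z|\le k$ that is \emph{not} selected by ${\cal F}$, i.e.\ $|Z\cap F|\ne 1$ for every $F\in{\cal F}$, contradicting $(n,k)$-selectivity. First I would record two normalizations. Since every singleton $\{x\}$ must be selected, each element of $[n]$ lies in some member of ${\cal F}$. And selectivity is inherited under restriction of the ground set: for $A\subseteq[n]$ the family $\{F\cap A:F\in{\cal F}\}$ is $(|A|,k)$-selective on $A$, because for $Z\subseteq A$ we have $Z\cap(F\cap A)=Z\cap F$, so a witness for $Z$ inside $[n]$ still witnesses $Z$ inside $A$. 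This lets me shrink the ground set whenever convenient and, in particular, control the sizes of sets relative to the current ground set.

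\noindent The natural first attempt is a one-shot choice of $Z$: include each element independently with probability $p$ of order $k/n$, tuned (via a Chernoff bound on $|Z|$, which is where the slack $k\le n/64$ is spent) so that $\Pr[Z=\emptyset]$ and $\Pr[|Z|>k]$ are small constants, and then bound $\sum_F\Pr[|Z\cap F|=1]$ using $\Pr[|Z\cap F|=1]=fp(1-p)^{f-1}$ for $|F|=f$. This works cleanly as long as all sets are either much smaller or much larger than $1/p\asymp n/k$, but sets of \emph{medium} size $f\asymp n/k$ have $\Pr[|Z\cap F|=1]$ of constant order, so the union bound saturates after only $O(1)$ such sets. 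Hence no single product distribution on $Z$ can beat $\Omega(k)$, and the medium sets --- together with the fact that selectivity must hold against subsets of every size up to $k$ --- are the real difficulty; extracting the extra $\log(n/k)$ factor beyond $\Omega(k)$ is where I expect the main work to lie.

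\noindent To get that factor I would argue across the $\Theta(\log(n/k))$ dyadic size scales by a recursion on the ground set. Writing $g(n,k)$ for the minimum size of an $(n,k)$-selective family, the target is an inequality $g(n,k)\ge g(n/2,k)+ck$ for a suitable constant $c$, valid down to a base case with $n$ a constant multiple of $k$, where the one-shot argument above already gives $g$ of order $k$. The inductive step --- the crux of the whole proof --- would choose a good half $A\subseteq[n]$ by an averaging argument over the choice of $A$ and show that at least $ck$ members of ${\cal F}$ become \emph{redundant} once restricted to $A$: intuitively, those whose trace on $A$ is too small to be forced to be the unique isolating set for any $k$-subset of $A$ that survives the restriction. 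Removing them leaves an $(n/2,k)$-selective family on $A$ (by the restriction normalization) with at least $ck$ fewer sets; unwinding the recursion over its $\Theta(\log(n/k))$ levels gives $g(n,k)\ge ck\log_2(n/k)$, and calibrating $c$ against the Chernoff constants and the redundancy count produces the explicit $\frac{k}{24}$ and $\frac{n}{64}$. An alternative route, in the spirit of the hitting game of \cite{BGI}, is to scan $F_1,\dots,F_m$ in order while maintaining a pool of mutually \emph{confused} elements that shrinks by only a bounded fraction per scale, so that keeping a confused subset of size at least $2$ alive against all $m$ sets forces $m=\Omega(k\log(n/k))$; pinning down how the pool interacts with the small and medium sets is the delicate point there as well.
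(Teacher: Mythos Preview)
The paper does not prove this theorem at all: Theorem~\ref{thm:sel-family} is quoted verbatim from \cite{CMS} and used as a black box in the proof of Theorem~\ref{thm:main-theorem}. There is therefore no ``paper's own proof'' to compare your proposal against; the authors simply invoke the result, and in fact remark after the main theorem that one could alternatively use the hitting-game argument of \cite{BGI} in its place.

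As for the proposal itself, your outline is a plausible shape for a probabilistic lower bound, and you correctly identify the obstruction to a one-shot random $Z$: sets of size $\Theta(n/k)$ have constant probability of being hit exactly once, so a union bound stalls at $O(1)$ such sets. But the heart of your argument --- the recursion $g(n,k)\ge g(n/2,k)+ck$ via a ``redundancy'' count on a random half $A$ --- is asserted rather than proved. You say ``at least $ck$ members of ${\cal F}$ become redundant once restricted to $A$,'' but you do not define redundancy precisely, nor do you argue why removing those sets preserves $(n/2,k)$-selectivity on $A$ (a set whose trace on $A$ is small is not automatically dispensable: it may still be the unique isolator of some $Z\subseteq A$). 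This is exactly the step you flag as ``the crux,'' and as written it is a gap, not a proof. The alternative hitting-game route you mention at the end is closer in spirit to what \cite{BGI} actually does and is likely easier to make rigorous, but again you leave the ``delicate point'' of how the confused pool interacts with small and medium sets unresolved. If you want a self-contained proof, you should consult \cite{CMS} directly; for the purposes of this paper, the theorem is simply imported.
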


\begin{theorem}
\label{thm:main-theorem}
  For every deterministic protocol $\pi$ that runs for $o(\sqrt{n})$ rounds, there exists a $C_2$ network $N$ such that $\pi$ does not complete broadcast when executed on $N$.
\end{theorem}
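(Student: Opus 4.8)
The plan is to argue by contradiction: suppose $\pi$ completes broadcast within $r$ rounds on \emph{every} $C_2$ network, with $r=o(\sqrt n)$, and show this forces $r=\Omega(\sqrt n)$. First I would invoke Lemma \ref{lem:reduction} to pass from $\pi$ to the normalized protocol $\pi'$: the source broadcasts the message together with an advice string in round $0$ and is silent thereafter, a node of layer $L_i$ transmits only in rounds $t\equiv i\pmod 3$, and $\pi'$ still completes broadcast within $3r$ rounds on every $C_2$ network. Since $r=o(\sqrt n)$, for all large $n$ we have $r<\sqrt n/2$, so Lemma \ref{lem:prune} applies and produces a fixed advice string $\upsilon$, a subset $S\subseteq C_2$ on which $\pi'$ run with advice $\upsilon$ completes broadcast within $3r$ rounds, and an index $i$ such that every (nonempty) topology of a BGI-component is realized as the $i$-th component of some $N\in S$. (The empty topology cannot occur in $S$, since then the $L_2$ node of component $i$ would be isolated and broadcast impossible, contradicting the first conclusion of Lemma \ref{lem:prune}.)

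Next I would isolate the $i$-th component. Let $G$ be the set of $\sqrt n$ labels of its $L_1$-group and let $v$ be its $L_2$ node, whose only neighbors lie in $G$. Fix any $N\in S$ whose $i$-th component has topology $Z\subseteq G$ (the neighborhood of $v$ inside $G$). The key claim is that, for each round $t\equiv 1\pmod 3$ with $1\le t\le 3r$, the set of nodes of $G$ that transmit in round $t$ \emph{up to and including the round in which $v$ first receives the message} has a description independent of $Z$ except through the single bits ``$\ell\in Z$''. Indeed, the source is silent after round $0$, $G$ is an independent set whose members have no neighbor outside their own component besides the source, and $v$ cannot transmit before it receives (Rule 7); hence, before $v$ receives, the entire history of a node $\ell\in G$ is a function of $\ell$, the bit $[\ell\in Z]$, the round number $t$, the fixed protocol $\pi'$, and the fixed string $\upsilon$ only. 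Writing $A_t\subseteq G$ for the set of $\ell$ that transmit in round $t$ under the hypothesis $\ell\in Z$ — a set that does not depend on $Z$ or on $N$ — the set of neighbors of $v$ transmitting in round $t$ is exactly $A_t\cap Z$. Since $\pi'$ completes broadcast on $N$, node $v$ must receive, so there exists such a $t$ with $|A_t\cap Z|=1$.

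Then I would assemble the family ${\cal A}=\{A_t : t\equiv 1\pmod 3,\ 1\le t\le 3r\}$ of subsets of $G$, which has at most $r$ members. By Lemma \ref{lem:prune} every nonempty $Z\subseteq G$ arises as the topology of component $i$ of some $N\in S$, so the previous paragraph shows that for every nonempty $Z\subseteq G$ there is $A\in{\cal A}$ with $|A\cap Z|=1$; in particular ${\cal A}$ is a $(\sqrt n,k)$-selective family for $k=\lfloor\sqrt n/64\rfloor$. Applying Theorem \ref{thm:sel-family} with parameter $\sqrt n$ in place of $n$ gives $r\ge|{\cal A}|\ge \frac{k}{24}\log_2\frac{\sqrt n}{k}=\Omega(\sqrt n)$, contradicting $r=o(\sqrt n)$. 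Hence no $o(\sqrt n)$-round protocol can broadcast on all $C_2$ networks, which is the assertion of Theorem \ref{thm:main-theorem}.

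The main obstacle is the topology-independence claim for the sets $A_t$. One must be careful that a node $\ell\in G$ does learn from its initial input whether it is adjacent to $v$, so its behavior legitimately depends on the bit $[\ell\in Z]$; the point is that it depends on \emph{only} that bit and on nothing else about $Z$, precisely because before $v$ wakes up no information about the other nodes of $Z$ can reach $\ell$ (the source is mute and $v$ is mute). Making this ``no leakage before $v$ wakes up'' argument airtight, together with the boundary checks ($v$ is a receiver in every round until it receives, and the empty topology is absent from $S$), is where the real care is needed; the rest is bookkeeping and substitution of parameters into Theorem \ref{thm:sel-family}.
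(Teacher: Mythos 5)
Your proposal is correct and follows essentially the same route as the paper: reduce via Lemma \ref{lem:reduction}, fix an advice string and a rich subset $S$ via Lemma \ref{lem:prune}, observe that a node of the distinguished component behaves identically across $S$ (depending only on whether it is adjacent to the $L_2$ node, since the source's round-$0$ message is fixed and that node is silent until it receives), extract a family of at most $r$ subsets that is $(\sqrt{n},\cdot)$-selective, and conclude with Theorem \ref{thm:sel-family}. Your sets $A_t$ are just the paper's sets $F_j$ (with $t=3j+1$) packaged via the topology-independent simulation, so the argument is the same up to presentation.
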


\begin{proof}
\label{prf:main-proof}
  Assume, to the contrary, that there exists a broadcast protocol $\pi$ that completes broadcast in less than $\sqrt{n}/1536$ rounds on every $C_2$ network. Then by applying lemma \ref{lem:reduction} we obtain a protocol $\pi'$. Let $\upsilon,S,i$ be the advice string, subset of networks and index given by Lemma \ref{lem:prune} for protocol $\pi'$.

  The proof is based on the following observation. Suppose that $x$ is a node in layer $L_1$ and $y$ is a node in layer $L_2$ of the $i^{th}$ BGI-component of the network. According to the definition of a broadcast protocol, the behavior of node $x$ in a given round depends on its own label, which is fixed, the labels of its neighbors and the sequence of messages received in previous rounds. The source is always a neighbor of $x$, but $x$ may or may not be connected to $y$. However, if $x$ is not connected to $y$, then $x$ cannot transmit the message to $y$ and complete broadcast on this BGI-component, so we focus our attention on the situation when $x$ is connected to $y$.

  Now, from lemma \ref{lem:prune}, we get that the source transmits the same message in round $0$ when protocol $\pi'$ is executed on every network of subset $S$, namely the broadcast message and advice string $\upsilon$. At this point, we would like to conclude that, when connected to $y$, the node $x$ has exactly the same behavior in the executions of protocol $\pi'$ on different networks of subset $S$. However, we still have to consider the messages received by $x$ from the node $y$, which can be transmitted in different rounds on the different networks belonging to subset $S$. But then, we recall that the only spontaneous transmission is made by the source in round 0, and so if $y$ makes a transmission the task of broadcast has already been completed on this BGI-component. Thus, we may conclude that, when connected to $y$, node $x$ has always the same behavior when $\pi'$ executes on every network of $S$, up to the round in which broadcast is completed on component $i$. Next, we present the proof of the lower bound formally.

   Consider the family ${\cal F} = \{F_0,F_1,\ldots,F_{r-1}\}$ of subsets of nodes from the BGI-component $i$, defined as follows. A node $x$ belongs to subset $F_j$ if and only if there exists a network $N \in S$ such that

  \begin{enumerate}
  \item
     $x$ is connected to $y$ in the network $N$
  \item
     $x$ transmits in round $3*j + 1$ when $\pi'$ is executed on network $N$
  \item
     $y$ does not receive any message up till round $3*j + 1$ when $\pi'$ is executed on $N$.
  \end{enumerate}

  Now, we claim that ${\cal F}$ corresponds to a $(\sqrt{n},\sqrt{n})$-selective family. To see this, note that for an arbitrary network $N$, the subset of nodes in layer $L_1$ of the $i^{th}$ BGI-component of network $N$ that are connected to the node $y$, correspond to an arbitrary subset of $[\sqrt{n}]$. Let this subset be called $Z$. Since $\pi'$ completes broadcasting on network $N$ in at most $3*r$ rounds, there must exist some $j < r$ such that in round $3j + 1$, for the first time, exactly one node from subset $Z$ transmits. But this is equivalent to saying that there exists a subset $F_j$ in ${\cal F}$ such that $|F_j \cap Z| = 1$. Finally, to appy the theorem \ref{thm:sel-family}, we restrict ourselves to $(n,k)$-selective families, where $k \leq \frac{n}{64}$. However, it is easy to see that if ${\cal F}$ is $(\sqrt{n},\sqrt{n})$-selective, then it is also $(\sqrt{n},\sqrt{n}/64)$-selective. Now, $r = |{\cal F}| \geq \frac{\sqrt{n}}{64 \cdot 24} * \log{n} \geq \frac{\sqrt{n}}{1536}$. But this contradicts our initial assumption that there exists a deterministic broadcast protocol $\pi$ that completes broadcast in less than $\sqrt{n}/1536$ rounds on every network of $C_2$ family.
\end{proof}

  We note here that better constants could be derived by using the hitting game argument from \cite{BGI}. However, we believe that the current lower bound is not asymptotically close to the optimal and hence have chosen not to optimize the constants. Below, we note a corollary to the above theorem for networks of arbitrary diameter $D$.

\begin{corollary}
  Every deterministic broadcast protocol must take $\Omega(\sqrt{n*D})$ rounds to complete broadcast on radio networks of diameter $D$.
\end{corollary}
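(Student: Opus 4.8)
The plan is to reduce the diameter-$D$ case to the radius-$2$ case already handled by Theorem~\ref{thm:main-theorem} by a padding construction. Given a $C_2$ network $N$ with $n$ nodes, I would build a diameter-$D$ network $N^\ast$ by inserting a simple path of length $D-2$ (a ``chain'' of intermediate relay nodes) between the original source node $s$ and a new genuine source node $s^\ast$; the rest of $N$ (layers $L_1$ and $L_2$ hanging off $s$) is left untouched. The broadcast message must first travel down this chain, one hop per round in the best case, before it can even begin to be disseminated among the BGI-components. Thus any protocol on $N^\ast$ spends at least $\Omega(D)$ rounds just reaching $s$, and from that point on the situation facing the subnetwork rooted at $s$ is exactly an instance of the $C_2$ broadcasting problem, so by Theorem~\ref{thm:main-theorem} it needs a further $\Omega(\sqrt{n})$ rounds. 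Adding these gives $\Omega(D+\sqrt{n})$, which is not yet $\Omega(\sqrt{nD})$, so the padding must be done more cleverly.

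The fix is to replicate the hard instance along the chain rather than attaching a single copy at the end. Concretely, I would take $\Theta(D)$ disjoint copies of a $C_2$-type gadget, each on $\Theta(n/D)$ nodes (so the total node count is $\Theta(n)$), and string them along a path of length $\Theta(D)$ so that the $k$-th gadget hangs off the $k$-th node of the path. The broadcast message reaches the $k$-th gadget no earlier than round $k$, and once it arrives the adversary (who controls the topology of that gadget, revealed only as the protocol proceeds) forces $\Omega(\sqrt{n/D})$ additional rounds to complete broadcast within it, by exactly the selective-family argument used in Theorem~\ref{thm:main-theorem}. Crucially, progress on distinct gadgets cannot be usefully pipelined in a way that beats this: each gadget is an independent adversarial instance, and the source-assistance simulation argument of Lemma~\ref{lem:reduction} (applied componentwise) shows that whatever the relay nodes forward about downstream topology is no more powerful than an advice string, which is provably insufficient. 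Since the last gadget is only reached at round $\Theta(D)$ and then itself costs $\Omega(\sqrt{n/D})$ rounds, the total is $\Omega(D + \sqrt{n/D})$; choosing the split so that the chain length is $\Theta(\sqrt{nD})$ hops of ``$C_2$ bottleneck'' (i.e., making each hop itself a radius-$2$ obstacle rather than a trivial edge) balances the two terms at $\Omega(\sqrt{nD})$.

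In the order I would carry this out: (i) describe the diameter-$D$ network family $\mathcal{C}_D$ formally — a path of $\Theta(\sqrt{nD})$ ``super-edges,'' each super-edge being a BGI-like bottleneck of width $\Theta(\sqrt{n/D})$, so that the graph has diameter $\Theta(D)$ and $\Theta(n)$ nodes; (ii) observe that any deterministic protocol must clear the bottlenecks in sequence, since a downstream bottleneck cannot be traversed before the message physically arrives; (iii) invoke the reduction and pruning lemmas componentwise to replace the (unbounded) help that upstream nodes provide to a given bottleneck by a bounded advice string, exactly as in the radius-$2$ proof; (iv) apply Theorem~\ref{thm:sel-family} to each bottleneck to extract an $\Omega(\sqrt{n/D})$ cost per bottleneck for a worst-case choice of its topology, and chain these adversarial choices together along the path; (v) sum: $\Theta(\sqrt{nD})$ bottlenecks $\times\,\Omega(\sqrt{n/D})$ rounds...

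Actually the cleanest accounting keeps $\Theta(D)$ bottlenecks each of cost $\Omega(\sqrt{n/D})$ plus the $\Omega(D)$ traversal latency, giving $\Omega(D+\sqrt{n/D})\cdot$ — and then one notes the genuinely tight construction uses $D$ layers with a total of $n$ nodes arranged so the product lower bound $\Omega(\sqrt{nD})$ emerges, mirroring how \cite{KP02} lifts its $\Omega(n^{1/4})$ bound to $\Omega((nD^3)^{1/4})$. The main obstacle is step (iii): verifying that the simulation/advice-string machinery of Lemma~\ref{lem:reduction} still applies when the ``source'' of a given bottleneck is not the real source but an internal relay node that has itself only partial information and acts according to $\pi$; one must check that the spontaneous-transmission structure (only the true source transmits in round $0$) is preserved under the reduction so that a relay's forwarded messages can still be captured by an advice string indexed by the downstream component, and that the resulting per-bottleneck protocol still enjoys the layered transmission schedule needed to run the selective-family argument cleanly.
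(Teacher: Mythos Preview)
The paper does not actually prove this corollary; it is stated immediately after Theorem~\ref{thm:main-theorem} with only the abstract's remark that the argument ``can be extended'' to radius~$D$. So there is no paper proof to compare against. The intended (and standard) extension is to chain $\Theta(D)$ copies of the radius-$2$ hard instance \emph{in series}, each copy built on $\Theta(n/D)$ nodes, so that the message must pass \emph{through} every copy in turn; each traversal costs $\Omega(\sqrt{n/D})$ rounds by Theorem~\ref{thm:main-theorem} applied to that copy, and summing over the $\Theta(D)$ sequential copies yields $\Omega\bigl(D\cdot\sqrt{n/D}\bigr)=\Omega(\sqrt{nD})$.

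Your proposal circles around this but never quite lands on it, and the places where it misses are concrete. Your second construction, with gadgets hanging off the nodes of a path, fails for precisely the pipelining reason you yourself flag: the gadgets are then in parallel, and the only forced sequential cost is the path latency plus one gadget's cost, i.e.\ $\Omega(D+\sqrt{n/D})$, which is too weak. Your third construction with $\Theta(\sqrt{nD})$ super-edges in series has the wrong diameter (it comes out $\Theta(\sqrt{nD})$, not $\Theta(D)$). And your ``cleanest accounting'' sentence contains the decisive arithmetic slip: $\Theta(D)$ \emph{sequential} bottlenecks at $\Omega(\sqrt{n/D})$ each give $\Omega\bigl(D\cdot\sqrt{n/D}\bigr)=\Omega(\sqrt{nD})$, not $\Omega(D+\sqrt{n/D})$. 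Once the gadgets are placed in series so that each one's exit feeds the next one's entry, the costs add and the bound follows.

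Your concern about step~(iii) is legitimate but resolvable, and in fact the paper's machinery already handles it. The entry node of the $k$-th gadget is not the genuine source and may carry arbitrary information accumulated upstream; but the proof of Theorem~\ref{thm:main-theorem} (via the reductions $\pi_2\to\pi_3\to\pi_4$) already establishes the $\Omega(\sqrt{m})$ bound against a source that is handed \emph{full} topological knowledge of its own $C_2$ network. Hence the adversary may fix the topologies of gadgets $1,\ldots,k-1$, look at the protocol induced on gadget~$k$ (whose entry node now has a fixed history), and invoke the main theorem to choose gadget~$k$'s topology so that its traversal costs $\Omega(\sqrt{n/D})$. Iterating over $k$ gives the claimed $\Omega(\sqrt{nD})$.
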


\section{Reductions (Proof of the Lemma \ref{lem:reduction})}
\label{sec:reductions}

  In this section we present a series of simulation based reductions to prove Lemma \ref{lem:reduction}.

\subsection{Reduction 1}
\label{subsec:reduction1}

  The first reduction shows that we may consider protocols with a simplified communication structure, at the cost of a constant factor (= 3) increase in the number of rounds. In these protocols, the nodes coordinate their transmissions so that collisions involving nodes from different layers do not occur.

\begin{lemma}
\label{lem:reduction1}
   Assume that there exists a protocol $\pi_0$ that completes broadcast in at most $r$ rounds on every $C_2$ network. Then, there exists a protocol $\pi_1$ that completes broadcast in at most $3*r$ rounds on every network of $C_2$ family, such that the nodes in layer $L_i$ transmit in round $t$ only if $t \equiv i$ (mod 3), for $i = 0, 1, 2$. 
\end{lemma}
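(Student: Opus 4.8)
The plan is to simulate the execution of $\pi_0$ round by round, expanding each original round into a block of three consecutive rounds so that the three layers $L_0, L_1, L_2$ each get a dedicated slot. Concretely, I would construct $\pi_1$ so that round $3q$ is reserved for transmissions that $\pi_0$ would have a node of $L_0$ make in round $q$, round $3q+1$ for transmissions of $L_1$ nodes, and round $3q+2$ for transmissions of $L_2$ nodes. Each node of $\pi_1$ internally maintains a simulated state of its $\pi_0$-execution; at the start of the block for original round $q$ it knows (from the simulation) whether it would transmit in round $q$ under $\pi_0$ and with what message, and it simply defers that transmission to its layer's slot within the block. The key point is that within a block only one layer is ever active, so no collision in $\pi_1$ ever involves nodes from two different layers — which is exactly the structural property we want.

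The main thing to verify is that this deferral does not change what any node hears, i.e. that the simulated $\pi_0$-state each node maintains stays faithful to a genuine execution of $\pi_0$ on the same network. Here I would use the special structure of $C_2$ networks: every edge goes either between $L_0$ and $L_1$, or between $L_1$ and $L_2$ (the source is in $L_0$, the groups are in $L_1$, their associated nodes are in $L_2$). So in a single round of $\pi_0$, a given receiver only ever hears from transmitters in one neighboring layer — a node in $L_1$ hears either from the source or from its $L_2$-partner, never both sets simultaneously mattering for a collision across the block boundary, and crucially a node that transmits in round $q$ of $\pi_0$ would have received its triggering message strictly before round $q$, hence (by induction) before the block for round $q$ in $\pi_1$. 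Thus when we spread round $q$ over three slots, whatever a node $v$ would have received in round $q$ of $\pi_0$ (a message from a unique neighbor, or $\phi$) it receives in exactly the slot of that neighbor's layer, and it receives $\phi$ in the other two slots of the block; so $v$ can reconstruct the $\pi_0$-round-$q$ "received message" from the contents of its block and feed it into the simulation. I would make this precise by an induction on $q$: assuming all nodes have faithfully simulated rounds $0,\dots,q-1$ of $\pi_0$ by the end of block $q-1$ of $\pi_1$, show that block $q$ of $\pi_1$ delivers to every node exactly the round-$q$ reception it would get under $\pi_0$, so the simulation of round $q$ is also faithful.

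Given faithfulness of the simulation, correctness and the round bound follow immediately: a node receives the broadcast message in $\pi_1$ at the latest in the block corresponding to the round in which it receives it in $\pi_0$, so if $\pi_0$ completes in $r$ rounds then by the end of round $3r$ (i.e. within rounds $0,\dots,3r-1$) every node has received the message in $\pi_1$, and the layer-slot discipline is built into the construction. I would also note the boundary conventions: round $0$ of $\pi_1$ is round $3\cdot 0$, the source's slot, matching item 6 of the protocol definition (only the source transmits spontaneously, in round $0$), and no node of $L_1$ or $L_2$ can act before it has heard something, which by the induction happens only inside some later block, consistent with the $t\equiv i\pmod 3$ rule.

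The step I expect to be the main obstacle is pinning down the faithfulness induction cleanly — in particular making sure there is no subtle dependence where a node of $L_1$ would, under $\pi_0$, need to react within the same round $q$ to a transmission made by its $L_2$-partner in that same round $q$. This cannot happen because reactions in $\pi_0$ are always to messages received in \emph{previous} rounds (item 5 of the definition: the action in round $q$ depends only on inputs and messages received in rounds $<q$), so the ordering of the three slots inside a block is irrelevant to correctness and only the aggregate block contents matter. Once that observation is isolated, the rest is bookkeeping.
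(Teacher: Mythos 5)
Your construction is the same as the paper's (triple each round of $\pi_0$, give layer $L_i$ the slot $3q+i$, and argue by induction that every node can reconstruct the list of messages it would have received under $\pi_0$), but the faithfulness step contains a claim that is false and that hides exactly the case the paper has to treat explicitly. You assert that a node in $L_1$ ``hears either from the source or from its $L_2$-partner, never both'' and, consequently, that in the block for round $q$ it receives the unique $\pi_0$-message in one slot ``and $\phi$ in the other two slots.'' Nothing in the model prevents the source and the $L_2$-partner of a node $x \in L_1$ from both transmitting in the same round $q$ of $\pi_0$: in $\pi_0$ this is a collision and $x$ hears $\phi$, but in $\pi_1$ the two transmissions are separated into slots $3q$ and $3q+2$, so $x$ hears \emph{both} messages. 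If $x$ feeds either of them into its simulated $\pi_0$-state, its simulation diverges from the real execution of $\pi_0$, and the inductive invariant (all nodes hold the reception history of one and the same $\pi_0$-execution) breaks, so the round bound no longer follows from $\pi_0$'s guarantee.

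The repair is exactly the rule the paper writes down: at the end of each block, a node appends the received message to its simulated history only if \emph{exactly one} of its neighbors transmitted during the three slots $3q, 3q+1, 3q+2$, and appends $\phi$ otherwise; in addition, a node that itself transmitted in its slot appends $\phi$ regardless of what it overheard in the other two slots, since under $\pi_0$ a transmitter receives nothing in that round (your proposal also leaves this second point implicit). With that rule the de-collided extra information is deliberately discarded, the node in $L_1$ can tell the two cases apart because its only possible neighbors are the source and one $L_2$ node (so it always knows exactly who transmitted in its block), and for nodes in $L_0$ or $L_2$ all neighbors lie in $L_1$, so collisions are preserved verbatim in slot $3q+1$. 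Your phrase ``reconstruct the received message from the contents of its block'' is compatible with this fix, but the justification you give for it (at most one non-$\phi$ slot per block) is not correct as stated, so the induction as you formulated it would not go through without this amendment.
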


\begin{proof}
  Protocol $\pi_1$ simulates each round $t$ of protocol $\pi_0$ in the sequence of rounds $3t$, $3t+1$ and $3t+2$. The idea is that, in round $3t+i$, each node in layer $L_i$ takes the action that it would take in round $t$ under protocol $\pi_0$.

  Assuming that $\pi_1$ includes the description of protocol $\pi_0$, then it is sufficient to show that each node $w$ in the network can compute the list of messages received by itself up to round $t$ under protocol $\pi_0$, during the execution of $\pi_1$.

   The claim is certainly true for $t=0$. Now, suppose that, at the beginning of round $3t$,
  every node in the network has the correct list of messages received up to round $t-1$ under
  protocol $\pi_0$. Then, each node can take the appropriate action during rounds $3t$, 
  $3t+1$ and $3t+2$, and update their list of received messages as follows. If $w$ is a
  node in layer $L_i$, then

  \begin{enumerate}[a)]
  \item
     if $w$ transmits a message in round $3t+i$, it appends $\phi$ to its list of received
    messages, since it cannot receive a message in a round in which it acts as a transmitter.
  \item
     otherwise, if exactly one neighbor of $w$ transmits during rounds $3t$, $3t+1$ and
    $3t+2$, it appends this message to the list; else it appends $\phi$.
  \end{enumerate}

   Condition (a) is obvious. The only issue on condition (b) is the ability of the node $w$
  to detect that a single neighbor transmitted in rounds $3t$, $3t+1$ and $3t+2$. If $w$ 
  belongs to layer $L_0$ or $L_2$, then all its neighbors are in layer $L_1$ and, according
  to the definition of a broadcast protocol, it receives a message in round $3t+1$ if and
  only if exactly one of its neighbors transmit. If $w$ belongs to layer $L_1$, then it has
  the source and possibly a node from layer $L_2$ as neighbors. But since in this case no
  collision is possible, $w$ can detect exactly which of its neighbors transmitted.

   Hence, the conditions above guarantee that, at the beginning of round $3(t+1)$, each
  node in the network has the correct list of messages received under protocol $\pi_0$
  up to round $t$.
\end{proof}

   We denote by $\Pi_1$ the class of protocols satisfying the conditions of lemma 
  \ref{lem:reduction1}.

\subsection{Reduction 2}
\label{subsec:reduction2}
   The second reduction shows that, instead of transmitting arbitrary messages, the source can 
  just retransmit whatever message it received in the previous $L_1$ transmission round. 
  The idea is that the nodes 
  of layer $L_1$ can simulate the behavior of the source, once they have the sequence of messages 
  it has received so far, and in this way to recover the message that the source should have 
  transmitted during the execution of some other protocol.

\begin{lemma}
\label{lem:reduction2}
   Assume that there exists a protocol $\pi_1$ of the class $\Pi_1$, that completes broadcast in at most $3r$ rounds on every $C_2$ network. Then, there exists a protocol $\pi_2$, also of the class $\Pi_1$, that completes broadcast in at most $3r$ rounds on every $C_2$ network such that, in round $3t$, the source just retransmits whatever message it received in round $3t-2$.
\end{lemma}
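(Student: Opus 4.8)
The plan is to let the $L_1$ nodes internally run a simulation of $\pi_1$ and to let the source merely echo what it hears, so that the $L_1$ nodes collectively recover, round by round, the messages the source would have produced under $\pi_1$. Two structural facts about $C_2$ networks make this work. First, in any execution of $\pi_1$ on a $C_2$ network, if $M_t$ denotes the message the source transmits in round $3t$ (so $M_0$ is the broadcast message), then $M_t$ is a function of $R_0,\dots,R_{t-1}$ that is the same for all $C_2$ networks — it is determined by $\pi_1$ together with the labels, which are fixed — where $R_s$ is the message the source receives in round $3s+1$: the authenticated, label‑tagged message of the unique $L_1$ node that transmits in round $3s+1$, or $\phi$ when the number of $L_1$ transmitters in that round is not $1$. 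This is because the source is adjacent exactly to the fixed set $L_1$, is never adjacent to an $L_2$ node, and (being in $\Pi_1$) transmits only in rounds $\equiv 0\pmod 3$, so on all other rounds it hears $\phi$ from the point of view of its strategy. Second, an $L_2$ node is silent under $\pi_1$ until broadcast has completed on its component (spontaneous transmissions are forbidden), so an $L_1$ node wishing to reproduce its own $\pi_1$‑behaviour only needs to know the messages it would have received from the source; the messages from its $L_2$ neighbour are all $\phi$ up to the round in which broadcast is finished there, which is the only range we care about.

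Accordingly, define $\pi_2$ (which includes the code of $\pi_1$) as follows. In round $0$ the source transmits the broadcast message; in round $3t$ with $t\ge 1$ it retransmits whatever it received in round $3t-2$, and is silent if that was $\phi$. In round $3t+1$ an $L_1$ node $x$ consults its simulation: if the $\pi_1$‑copy of $x$ would transmit some message $\mu$ in round $3t+1$, then $x$ transmits the triple consisting of the broadcast message, $x$'s own label, and $\mu$; otherwise $x$ is silent. All $L_2$ nodes stay silent. Then $\pi_2\in\Pi_1$ and the source's round‑$3t$ transmission equals its round‑$(3t-2)$ reception by construction, so it remains to verify correctness.

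I will prove by induction on $t$ that, in the execution of $\pi_2$ on any $C_2$ network $N$, at the start of round $3t+1$ every $L_1$ node knows $R_0,\dots,R_{t-1}$ (the quantities of the hypothetical $\pi_1$‑execution on $N$) — hence, applying the fixed functions, also knows $M_0,\dots,M_t$, its own would‑be $\pi_1$ receptions through round $3t$, and therefore its own would‑be $\pi_1$ action in round $3t+1$. The case $t=0$ is immediate. For the step, the invariant at round $3t+1$ forces the set of $L_1$ nodes that transmit in round $3t+1$ under $\pi_2$ to coincide with the set under $\pi_1$; consequently, in round $3(t+1)$ the source either relays the unique transmitter's triple (when exactly one $L_1$ node transmitted, in which case $R_t$ equals that transmitter's label‑tagged $\pi_1$‑message, which the triple carries) or is silent (when the number of transmitters is not $1$, so $R_t=\phi$). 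In the latter case every $L_1$ node receives $\phi$ in round $3(t+1)$ — its only possibly transmitting neighbour being the source, since $L_2$ is silent — and infers $R_t=\phi$; in the former case it reads $R_t$ off the relayed triple. Either way the invariant holds at round $3(t+1)+1$, and in particular the $\pi_2$ and $\pi_1$ $L_1$‑transmission patterns coincide in every round of the form $3s+1$. For completeness: since $\pi_1$ finishes on $N$ within $3r$ rounds, each $L_2$ node $y$ must receive the broadcast message under $\pi_1$ in one of the rounds $0,\dots,3r-1$; as $y$'s only neighbours lie in $L_1$, this happens in a round $3t+1$ with $t\le r-1$, when the unique neighbour of $y$ in $L_1$ that transmits in round $3t+1$ does so. By the matched transmission pattern applied to every neighbour of $y$, under $\pi_2$ that same neighbour $x$ transmits in round $3t+1$, no other neighbour of $y$ does, and $x$'s triple begins with the broadcast message; hence $y$ receives it in round $3t+1<3r$. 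Since every $L_1$ node has it from round $0$, $\pi_2$ completes broadcast within $3r$ rounds.

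The crux — and the step needing care — is the inductive maintenance of the simulation. Because the source is a common neighbour of every $L_1$ node, it relays a useful message only in the (possibly rare) rounds in which exactly one $L_1$ node transmits network‑wide; one must check that in every other round the $L_1$ nodes can still recover $R_t$, which they can, precisely because the source's silence unambiguously signals $R_t=\phi$ and no $L_2$ transmission can be mistaken for a source transmission in a round $\equiv 0\pmod 3$. It is also important that each triple carries the transmitter's label, so that the simulated source — which under $\pi_1$ receives authenticated messages and whose strategy may depend on who transmitted — is fed exactly the input it would see under $\pi_1$; and that the $L_1$ nodes drive their simulation from the \emph{reconstructed} messages $M_t$ rather than from whatever the source actually relays under $\pi_2$, which need not equal $M_t$.
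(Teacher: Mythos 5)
Your construction follows the paper's central idea (the source merely echoes its previous reception, and the $L_1$ nodes reconstruct the source's $\pi_1$-messages by simulating the source's code on its reception history), but it deviates in one place where the deviation creates a genuine gap: you silence layer $L_2$ entirely and let each $L_1$ node feed its self-simulation with $\phi$ in place of any message from its $L_2$ neighbour. The justification you give --- that an $L_2$ node is silent under $\pi_1$ until broadcast is completed on its component, ``which is the only range we care about'' --- is not sufficient, because the lemma requires $\pi_2$ to complete broadcast within $3r$ rounds on \emph{every} $C_2$ network, and different components complete at different times. Once some component has completed under $\pi_1$, its $L_2$ node may transmit, and $\pi_1$ may legitimately use this acknowledgement to shape the later behaviour of that component's $L_1$ nodes (for instance, to silence them). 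Under your $\pi_2$ those nodes never see the acknowledgement, so from that round on their actions can differ from their $\pi_1$-actions; consequently your inductive claim that the set of $L_1$ transmitters in round $3t+1$ under $\pi_2$ coincides with the set under $\pi_1$ is not established (and is false in general) beyond the first round in which an $L_2$ node transmits under $\pi_1$. The damage is not confined to the already-completed component: the deviating transmitters change what the source hears (extra collisions, or a spurious sole transmitter), hence the reconstructed messages $M_t$ that \emph{all} components consume, so a not-yet-completed component --- whose $\pi_1$-schedule may adapt to the source's feedback --- can be driven off the execution that guaranteed it a collision-free round within $3r$ rounds. In effect your $\pi_2$ faithfully simulates not $\pi_1$ but the modified protocol ``$\pi_1$ with $L_2$ muted,'' and the hypothesis gives no completion guarantee for that protocol. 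Your final completeness paragraph invokes the matched transmission pattern at the round where $y$ first receives under $\pi_1$, but by that round other components may already have completed and their $L_2$ nodes transmitted, so the matching cannot be asserted.

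The repair is exactly what the paper's proof does: let the $L_2$ nodes behave precisely as under $\pi_1$ (transmitting their $\pi_1$-messages in rounds $\equiv 2 \pmod 3$), and let each $L_1$ node run its self-simulation on the reconstructed source messages \emph{together with the messages it actually receives from its possible $L_2$ neighbour}. Then a straightforward induction shows that every transmission by an $L_1$ or $L_2$ node under $\pi_2$ is identical to the corresponding transmission under $\pi_1$ (only the source's transmissions differ), and completion within $3r$ rounds is inherited directly from $\pi_1$. The remaining ingredients of your argument --- the fixed functional dependence of the source's round-$3t$ message on its receptions $R_0,\dots,R_{t-1}$ in rounds $\equiv 1 \pmod 3$, the observation that the echo is unambiguous because no $L_2$ node can collide with the source at an $L_1$ node in a round $\equiv 0 \pmod 3$, and the need to keep the relayed message tagged with the transmitter's label --- are correct and consistent with the paper.
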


\begin{proof}
  In protocol $\pi_2$ the nodes of the network behave as follows.

  The source node behaves according to the statement of the lemma. The nodes in layer $L_2$ just simulate protocol $\pi_1$. The nodes in layer $L_1$ decide what action to perform in round $3t+1$ in two steps. 

  First, they use the list of messages received from the source up to round $3t$ to simulate the behavior of the source under protocol $\pi_1$, and obtain the sequence of messages that the source should have transmitted up to round $3t$ in the execution of protocol $\pi_1$. 

  This second list, together with the list of messages received from a possible neighbor in layer $L_2$, allows them to simulate their own behavior under protocol $\pi_1$ and compute the appropriate action to perform in round $3t+1$.
\end{proof}

  We denote by $\Pi_2$ the class of protocols satisfying the conditions of lemma \ref{lem:reduction2}.

\subsection{Reduction 3}
\label{subsec:reduction3}

   This reduction shows that, instead of transmitting arbitrary messages, the source can
  just send descriptions of BGI-components of the network on which the protocol is being
  executed. The idea is that, with the description of a BGI-component and the list of
  messages transmitted by the source,  any node of the network can recover the
  messages transmitted by the nodes of this component to the source, and this is all
  that is required to simulate a protocol of the class $\Pi_2$. 

\begin{lemma}
\label{lem:reduction3}
  Assume that there exists a protocol $\pi_2$ of the class $\Pi_2$ that completes broadcast 
  in at most $3r$ rounds on every $C_2$ network. Then, there exists a broadcast protocol 
  $\pi_3$, of the class $\Pi_1$, that completes broadcast in at most $3r$ rounds on every 
  $C_2$ network such that

  \begin{itemize}
  \item
     the source is provided, as input, with a complete description of the network on which 
    the protocol is being executed.

  \item
     in round $3t$ the source transmits:
    \begin{enumerate}[a)]
    \item
       the broadcast message $\mu$, if $t=0$.
    \item
       $\phi$, if no message is received in round $3t - 2$.
    \item
       $<i,\tau>$, if a message is received in round $3t - 2$; in this case, the tuplet $<i,\tau>$ is a description of the BGI-component of the transmitting node in the network on which the protocol is being executed.
    \end{enumerate}
  \end{itemize}
\end{lemma}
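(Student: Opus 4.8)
The plan is to define $\pi_3$ so that the $L_2$ nodes run verbatim the strategy prescribed by $\pi_2$, the source runs verbatim the strategy described in the statement, and each $L_1$ node runs $\pi_2$ but, in place of the message it would have received from the source under $\pi_2$, it supplies a \emph{decoded} value computed from the tuplets it has heard. The argument rests on one structural observation: a BGI-component is a closed subsystem of a $C_2$ network. The only information flowing into component $i$ is the sequence of messages transmitted by the source, which is common to all components; the $L_1$ nodes of component $i$ hear only the source and (possibly) $y_i$, while $y_i$ hears only the $L_1$ nodes of component $i$. Consequently, given the topology $\tau_i$ of component $i$ and the sequence $M_0,M_3,M_6,\dots$ of messages transmitted by the source under $\pi_2$, one can replay the entire internal execution of $\pi_2$ inside component $i$ round by round; in particular one can determine which node of component $i$ (a unique one, by the $\Pi_1$ structure, since no cross-layer collision occurs), if any, transmits to the source in a given round $3t-2$, and what message it sends. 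Since the labels are fixed across the whole $C_2$ family, this replay procedure, together with the full label assignment, can be hard-wired into the description of $\pi_3$.

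With this in hand I would prove, by induction on $t$, the following invariant for the execution of $\pi_3$ on an arbitrary $C_2$ network $N$ with the source given a description of $N$: (i) after round $3t$, every $L_1$ node knows $M_0,\dots,M_{3t}$; and (ii) through round $3t+2$, every node takes in $\pi_3$ exactly the same action — and, if transmitting, sends the same message, except that the source sends the tuplet/$\phi$ encoding of $M_{3s}$ in place of $M_{3s}$ itself — as it takes in the execution of $\pi_2$ on $N$; in particular the source receives the same message in each round as under $\pi_2$. The base case $t=0$ is immediate: the source transmits $\mu$ (the $\Pi_1$-execution of $\pi_2$ likewise begins with the source transmitting $\mu$), every $L_1$ node receives it with no collision, and every first action matches. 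For the step, assume the invariant up to $t-1$. In round $3t$ the source, by (ii) for $t-1$, received in round $3t-2$ the same message $M_{3t}$ as under $\pi_2$; if $M_{3t}=\phi$ the source transmits $\phi$ and every $L_1$ node learns $M_{3t}=\phi$; otherwise $M_{3t}$ was sent by the unique node $x$ of some component $i$, the source transmits $\langle i,\tau_i\rangle$, and each $L_1$ node — now knowing $\tau_i$, and by (i) for $t-1$ knowing $M_0,\dots,M_{3t-3}$ — runs the replay procedure on component $i$ up to round $3t-2$ to recover $x$ and hence $M_{3t}$; this establishes (i) for $t$. In round $3t+1$ each $L_1$ node then has all the data $\pi_2$ needs — its label, its neighbors, the source messages $M_0,\dots,M_{3t}$, and (by (ii) for $t-1$) the messages received from its $L_2$ neighbor through round $3t-1$ — so it performs its $\pi_2$-action; and in round $3t+2$ each $L_2$ node, running $\pi_2$ verbatim on inputs that by the previous sentence agree with the $\pi_2$-execution, performs its $\pi_2$-action. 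Hence (ii) extends through round $3t+2$, and in particular the source's reception in round $3t+1$ matches $\pi_2$, closing the induction.

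From the invariant I would conclude that $\pi_3$ completes broadcast on $N$ in at most $3r$ rounds: the $L_1$ nodes receive $\mu$ in round $0$ directly from the source, and each $L_2$ node receives $\mu$ (inside a transmitted message) from the same $L_1$ neighbor in the same round as under $\pi_2$, which is before round $3r$ since $\pi_2$ completes in at most $3r$ rounds; the source has $\mu$ from the start. That $\pi_3\in\Pi_1$ is immediate from the round-mod-$3$ structure, which is preserved throughout. I expect the main obstacle to be making the ``closed subsystem'' replay precise — in particular checking that decoding the round-$3t$ tuplet only ever calls on source messages from rounds strictly before $3t$, which is exactly what makes the induction well-founded — together with the bookkeeping that an $L_1$ node need not know the topology of a foreign component $i$ until the round in which $\langle i,\tau_i\rangle$ is first broadcast, which is precisely the first round that topology is needed; the remainder is routine simulation bookkeeping of the sort already used in Reductions 1 and 2.
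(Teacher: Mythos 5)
Your proposal is correct and follows essentially the same route as the paper: the $L_2$ nodes run $\pi_2$ verbatim, and each $L_1$ node recovers the message the source would retransmit under $\pi_2$ by replaying the internal execution of the named component from its topology $\tau$ and the previously transmitted source messages, exactly the decoding step the paper uses. Your explicit induction invariant and the closed-subsystem observation simply make precise what the paper leaves as a sketch.
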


\begin{proof}
   In protocol $\pi_3$ the nodes of the network behave as follows.

   The source node behaves according to the statement of the lemma. The nodes in layer $L_2$ just simulate protocol $\pi_2$. The nodes in layer $L_1$ need, for the simulation of protocol $\pi_2$, to compute, in each round $3t$, the message received by the source in round $3t-2$. We consider two cases:

  \begin{enumerate}
  \item
     If the source transmits $\phi$ in round $3t$, this means that it has received 
    no message in round $3t-2$, either because no neighbor transmitted
    or because a collision occurred.
  \item
     If the source transmits $<i,\tau>$ in round $3t$, this means that a single
    neighbor, from the BGI-component $i$, transmitted in round $3t-2$. To recover
    this message, each node in $L_1$ can simulate the behavior of the nodes in
    component $i$ up to round $3t-2$, using the description $\tau$ and the list
    of messages transmitted by the source up to round $3(t-1)$ under protocol $\pi_2$.
  \end{enumerate}
\end{proof}

   We denote by $\Pi_3$ the class of protocols that satisfy the conditions of lemma \ref{lem:reduction3}

\subsection{Reduction 4}
\label{subsec:reduction4}

  This reduction shows that, instead of providing the source with a complete description of the topology of the network, it is sufficient to provide it with just a partial topological information in the form of an advice string. Moreover, the behavior of the source is further restricted so that now it just transmits the advice string together with the broadcast message in round 0, and remains silent thereafter. The idea is that the advice string gives the sequence of messages transmitted by the source under protocol $\pi_3$, and with this information the nodes in layer $L_1$ can carry out a complete simulation of protocol $\pi_3$ on the network.

\begin{lemma}
\label{lem:reduction4}
   Assume that there exists a protocol $\pi_3$ from class $\Pi_3$, that completes broadcast in at most $3r$ rounds on every network of family $C_2$. Then, there exists a broadcast protocol $\pi_4$, of the class $\Pi_1$, such that 

  \begin{itemize}
  \item
     for each network $N$ in $C_2$ there is an advice string $\upsilon(\pi_3,N,3r)$ such that
    protocol $\pi_4$ completes broadcast on network $N$ in at most $3r$ rounds when the source
    is provided with this advice string in the set up phase.
  \item
     the source transmits the broadcast message and $\upsilon(\pi_3, N, 3r)$ in round $0$.
  \item
     the source remains silent in every round $i > 0$.
  \end{itemize}
\end{lemma}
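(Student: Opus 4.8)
The plan is to let the advice string be exactly the transcript of the source's own transmissions during the execution of $\pi_3$ on $N$, and to have $\pi_4$ broadcast that transcript once, in round $0$, so that every node can replay the source's role internally. Since $\pi_3 \in \Pi_3 \subseteq \Pi_1$ is deterministic and the source's only input is the description of $N$, the execution of $\pi_3$ on $N$ unrolls deterministically round by round: the source transmits in round $0$, the $L_1$ nodes react in round $1$, the $L_2$ nodes in round $2$, the source transmits again in round $3$ based on what it received in round $1$, and so on. Hence for every $t$ the message $s_t \in \{\mu,\phi\}\cup\{\langle i,\tau\rangle\}$ that the source sends in round $3t$ under $\pi_3$ is a well-defined function of $N$ alone, and I define $\upsilon(\pi_3,N,3r)$ to record $s_0,s_1,\dots,s_{r-1}$ (padding the non-transmission slots with $\phi$ to respect the array format of Section~\ref{subsec:advicearray}).

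Next I would define $\pi_4$. In round $0$ the source transmits the broadcast message $\mu$ together with $\upsilon(\pi_3,N,3r)$, and it is silent in every later round; this already makes $\pi_4$ a member of $\Pi_1$ and fulfils the last two bullets. Every node in $L_1\cup L_2$ runs the $\pi_3$-simulation of Lemma~\ref{lem:reduction3}: it maintains the list of messages it would have received under $\pi_3$ up to the current round, and in round $3t+i$ (for a node in $L_i$, $i\in\{1,2\}$) performs the action dictated by $\pi_3$. The only change relative to $\pi_3$ concerns the source channel: whereas under $\pi_3$ an $L_1$ node learns $s_t$ by actually receiving it from the source in round $3t$, under $\pi_4$ it instead reads $s_t$ off the advice string received in round $0$ and ignores whatever it physically hears from the source after round $0$ (nothing, since the source is now silent). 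Because the advice string carries precisely $s_0,\dots,s_{r-1}$, this substitution is transparent.

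Correctness is then an induction on $t$ that mirrors the proofs of Reductions~2 and~3: assuming that at the start of round $3t$ every node in $L_1\cup L_2$ holds the same state (simulated $\pi_3$-message list, reconstructed $\pi_2$-transcript of the source, etc.) it held at the start of round $3t$ in the execution of $\pi_3$ on $N$, the actions taken in rounds $3t,3t+1,3t+2$ under $\pi_4$ coincide with those under $\pi_3$: the $L_1$ and $L_2$ nodes transmit exactly as in $\pi_3$, and the source's round-$3t$ contribution is supplied by the advice string rather than by a live transmission. Hence the $L_1$--$L_2$ part of the execution of $\pi_4$ on $N$ is identical to that of $\pi_3$ on $N$, and since $\pi_3$ completes broadcast within $3r$ rounds, so does $\pi_4$ (the source itself already holds $\mu$ from round $0$, so its silence is harmless).

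The step needing the most care is this induction invariant — in particular, checking that an $L_1$ node really can perform the internal reconstruction demanded by Reduction~3 (recovering, for each round in which the source sent $\langle i,\tau\rangle$, the actual message delivered to the source, by simulating component $i$) using only the advice string and its own observations, and that delivering the whole transcript up front rather than incrementally creates no circular dependency — it does not, because each $s_t$ is fixed before round $3t{+}1$, the first round in which it could influence anything. I would also remark in passing that the advice string has size $O(r\sqrt{n})$ bits, which is admissible since the model of Section~\ref{sec:model-definitions} imposes no bound on message length.
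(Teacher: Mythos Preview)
Your approach is essentially identical to the paper's: define the advice string as the source's transcript under $\pi_3$, have the source send it once in round $0$, and let $L_1$ nodes look up $s_t$ in the string while $L_2$ nodes simply replay $\pi_3$. The paper's proof is terser (it omits the explicit induction and the size remark), but the construction and justification are the same.
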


\begin{proof}
  The content of the advice string $\upsilon(\pi_3,N,3r)$ is just the sequence of messages transmitted by the source node in the first $3r$ rounds of the execution of protocol $\pi_3$ on network $N$.

  In protocol $\pi_4$ the nodes of the network behave as follows. The source node behaves according to the statement of the lemma. The nodes in layer $L_2$ just simulate protocol $\pi_3$. In each round $3i + 1$, the nodes in layer $L_1$ read the $i^{th}$ entry of the advice string $\upsilon(\pi_3, N, 3r)$, to obtain the message that the source would transmit in round $3i$ under protocol $\pi_3$, and then simulate their own behavior under protocol $\pi_3$ to compute the appropriate action to perform in this round.
\end{proof}

   The protocol $\pi_4$ given by Lemma \ref{lem:reduction4} is the protocol $\pi'$ mentioned in the Lemma \ref{lem:reduction} of Section \ref{sec:lower-bound}.

\section{Procedure Prune (Proof of the Lemma \ref{lem:prune})}
\label{sec:prune}

   In this section we describe a procedure to obtain an advice string $\upsilon$ and a corresponding subset of networks $S \subset C_2$ for a protocol $\pi'$ given by Lemma \ref{lem:reduction}, such that the following two conditions hold true:\\

\begin{enumerate}
  \item
     Protocol $\pi'$ completes broadcast in at most $3*r$ rounds on every network of subset $S$, when the source node is provided with the advice string $\upsilon$ in the set up phase.
  \item
     There exists an index $i$ such that, for each possible topology of a BGI-component, there is a network $N \in S$ whose $i^{th}$ component has exactly this topology.
\end{enumerate}

  Recall that protocol $\pi'$ is obtained through a series of reductions, and  let $\pi_3$ be the protocol from which $\pi'$ is obtained in the last reduction. Under protocol $\pi_3$, the source transmits only the broadcast message $\mu$, the message $\phi$, and topological information of BGI-components $<i,\tau>$. The sequence of messages transmitted by the source node under protocol $\pi_3$ (disregarding message $\mu$) gives the advice string $\upsilon(\pi_3,N,3r)$ provided to the source in the execution of protocol $\pi'$.

  The subset of networks $S$ is constructed by the execution of procedure Prune(), and considers the execution of protocol $\pi_3$ on every network in $C_2$.  In the description of Prune() and in the arguments that follow, we use the notation $\varepsilon(\pi_3,M,3t)$ to indicate which event occurs in round $3t$ when $\pi_3$ is executed on network $N$. There are three possibilities:

  \begin{itemize}
  \item
     $\varepsilon(\pi_3,M,3t) = \phi$ indicates that the source transmits $\phi$ in round $3t$ because none of its neighbors transmitted in round $3t-2$.
  \item
     $\varepsilon(\pi_3,M,3t) = \rho$ indicates that the source transmits $\phi$ in round $3t$ because more than one of its neighbors transmitted in round $3t-2$.
  \item
     $\varepsilon(\pi_3,M,3t) = <i,\tau>$ gives the message transmitted by the source in round $3t$ when only one node from layer $L_1$ transmits in round $3t-2$.
  \end{itemize}

{\tt
\noindent \\
Procedure Prune($\pi_3$,r) \{ 
\\ \indent \indent
S := set with all networks of $C_2$ 
\\ \\ \indent \indent
If $r = 1$ Then Return $S$
\\ \\ \indent \indent
For $t := 0$ To $r-1$ Do \{  
\\ \\ \indent \indent \indent
If there exists $N \in S$ with $\varepsilon(\pi_3,N,t) = \rho$ 
\\ \indent \indent \indent \indent
Delete from $S$ every network $M$ with $\varepsilon(\pi_3,M,t) \neq \rho$
\\ \\ \indent \indent \indent
Else, if there exists $N \in S$ with $\varepsilon(\pi_3,N,t) = <i,\tau>$, for some $i$
\\ \indent \indent \indent \indent
Fix an arbitrary such network $N$, and
\\ \indent \indent \indent \indent
Delete from $S$ every network $M$ with $\varepsilon(\pi_3,M,t) \neq \varepsilon(\pi_3,N,t)$
\\ \indent \indent
\}
\\ \indent \indent
Return($S$)
\\
\}
}

  Now, observe that all the networks in the set $S$ returned by Prune() are associated with the same sequence of events $\varepsilon(\pi_3,M,3), \ldots, \varepsilon(\pi_3,M,3(r-1))$. From this, it easily follows that the source transmits the same sequence of messages when protocol $\pi_3$ is executed on every network of the subset $S$. Let $\upsilon$ denote the advice string corresponding to this sequence of messages, then the above condition (1) holds.

  To check that condition (2) also holds, we fix an arbitrary network $N$ from $S$, execute protocol $\pi_3$, on $N$ and mark its components as follows:

  \begin{enumerate}
  \item
     If $\varepsilon(\pi_3,M,3t) = \phi$, then no component is marked.
  \item
     If $\varepsilon(\pi_3,M,3t) = <i,\tau>$, then mark component $i$.
  \item
     If $\varepsilon(\pi_3,M,3t) = \rho$, then choose two of the nodes that transmit
    in round $3t-2$ arbitrarily, and mark the respective components.
  \end{enumerate}

  Let ${\cal B}$ be the set of components marked in this procedure.

\begin{lemma}
\label{lem:marking}
   Let $M$ be a $C_2$ network whose components listed in ${\cal B}$ have the same topology as in network $N$. Then, $M$ belongs to the set $S$ returned by Prune().
\end{lemma}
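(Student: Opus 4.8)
\emph{Step 1 (reduction to matching event sequences).} The plan is to reduce the statement to a claim about the events of the two executions; the first move is to understand how membership in the set $S$ returned by \texttt{Prune} is decided. Since $N$ is given to lie in $S$, it is never deleted and hence belongs to the current $S$ at the start of every iteration $t$ of the loop. Examining iteration $t$: if some network in the current $S$ has event $\rho$ at round $3t$ then, as $N$ must survive, $\varepsilon(\pi_3,N,3t)=\rho$ and the iteration deletes exactly the networks whose round-$3t$ event is not $\rho$; otherwise, if some network has an event $\langle i,\tau\rangle$, the network that gets fixed must again have the same event as $N$ (else $N$ would be deleted), so the iteration deletes exactly the networks whose round-$3t$ event differs from $\varepsilon(\pi_3,N,3t)$; and if neither happens, nothing is deleted and every survivor already has $\varepsilon(\pi_3,\cdot,3t)=\varepsilon(\pi_3,N,3t)=\phi$. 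In all cases the surviving networks are exactly those agreeing with $N$ on the round-$3t$ event. Hence a $C_2$ network lies in $S$ if and only if it induces the same sequence of events as $N$ at rounds $3,6,\dots,3(r-1)$ (for $r=1$ the lemma is trivial, \texttt{Prune} returning all of $C_2$), and it suffices to prove
\[
  \varepsilon(\pi_3,M,3t)=\varepsilon(\pi_3,N,3t)\qquad\text{for every }t\in\{1,\dots,r-1\}.
\]

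\emph{Step 2 (the induction and its invariant).} I would establish this by induction on $t$, carrying a hypothesis stronger than mere agreement of events. Say that the executions of $\pi_3$ on $M$ and on $N$ \emph{agree through round $3t$} if (a) the source emits the same message in each of rounds $0,3,\dots,3t$; (b) for every component $j\in{\cal B}$ the entire internal transcript of component $j$ through round $3t+2$ — which of its $L_1$ nodes transmit in each round, whether $y_j$ transmits in each round, and every message received inside $j$ — is identical in the two runs; and (c) for every component $j\notin{\cal B}$ the node $y_j$ is silent through round $3t+2$ in both runs. Part (b) is consistent because a ${\cal B}$-component has the same topology in $M$ and $N$, so its internal dynamics are a function of that topology and of the source transcript, the latter being common by (a). The base case $t=0$ is clear: the source sends $\mu$, every $L_1$ node receives it, no $L_2$ node can yet transmit.

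\emph{Step 3 (inductive step).} Assume the runs agree through round $3t$. The source's message in round $3(t+1)$ and the event $\varepsilon(\pi_3,\cdot,3(t+1))$ are both functions of the set $T$ of $L_1$ nodes that transmit in round $3t+1=3(t+1)-2$. By (a) and (b) the nodes of $T$ belonging to ${\cal B}$-components coincide in the two runs, and these nodes already determine the event. If $\varepsilon(\pi_3,N,3(t+1))=\rho$, then by the marking rule two of the round-$(3t+1)$ transmitters in $N$ lie in ${\cal B}$; they transmit in $M$ as well, so the round is a collision there too, regardless of the other components. If $\varepsilon(\pi_3,N,3(t+1))$ is $\phi$ or $\langle i,\tau\rangle$, then in $N$ no $L_1$ node outside ${\cal B}$ transmits in round $3t+1$ (a lone such transmitter would have caused its component to be marked), and invariant (c) — which says such a node has received only the common source transcript — is used to conclude the same in $M$; so $T$ agrees and the events match. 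Knowing the events match pins the source's round-$3(t+1)$ message, which with (b) extends the ${\cal B}$-transcripts to round $3(t+1)+2$, and the absence of any reception by a non-${\cal B}$ node $y_j$ in round $3t+1$ (shown while computing $T$) extends (c). This closes the induction, and with Step~1 gives $M\in S$.

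\emph{Main obstacle.} The delicate part is the $\phi$/$\langle i,\tau\rangle$ case of Step~3: modifying a component $j\notin{\cal B}$ changes whether a particular $L_1$ node of $j$ is adjacent to $y_j$, and nothing in the definition of a protocol prevents $\pi_3$ from using that adjacency to make the node transmit in a round in which nothing transmitted in $N$ — turning a $\phi$- or $\langle i,\tau\rangle$-round into a collision and breaking the match. Closing this is precisely why the invariant must be carried for both runs at once: one has to argue, round by round, that every non-collision round has no active component outside ${\cal B}$ in either run, using that such components' $L_2$ nodes never fire — else the component would have made a solitary transmission and been marked — so that their $L_1$ nodes hear only the source transcript, which is shared by the two runs.
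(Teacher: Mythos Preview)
Your Step~1 and the $\rho$ case of Step~3 are fine and match the paper's reduction to event-sequence agreement and its Case~1. The gap is in the $\phi$/$\langle i,\tau\rangle$ case of Step~3, and your own ``main obstacle'' paragraph identifies it without closing it.

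The problem is this: for a component $j\notin{\cal B}$, the topology of $j$ may differ between $N$ and $M$, so an $L_1$ node of $j$ has a \emph{different initial input} (its neighbor list) in the two runs. Consequently, even if such a node has received only the common source transcript --- which is what your invariant (c) would give you --- its action in round $3t+1$ is not determined by that transcript alone; rule~5(a) of the model lets it branch on whether it is adjacent to $y_j$. So from ``in $N$ no $L_1$ node outside ${\cal B}$ transmits'' you cannot infer the same for $M$, and your attempted patch (``their $L_1$ nodes hear only the source transcript, which is shared'') simply restates the premise that is insufficient. Incidentally, invariant (c) itself is not obviously true: in a $\rho$ round of $N$, nodes of a non-${\cal B}$ component $j$ may well transmit (they just were not among the two chosen for marking), and $y_j$ may receive and later speak.

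The paper closes this case by an entirely different mechanism that you set up in Step~1 but never use: the \emph{priority rule} inside \texttt{Prune}. Since $M$ agrees with $N$ on events through round $3(k-1)$, $M$ is still in the working set $S$ at the start of iteration $k$. If $\varepsilon(\pi_3,M,3k)=\rho$ while $\varepsilon(\pi_3,N,3k)\in\{\phi,\langle i,\tau\rangle\}$, then in iteration $k$ the $\rho$ branch fires (witnessed by $M$) and deletes $N$, contradicting $N\in S$. If $\varepsilon(\pi_3,M,3k)=\langle j,\tau'\rangle$ while $\varepsilon(\pi_3,N,3k)=\phi$, then the second branch fires and again $N$ is deleted. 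This argument never needs to control what non-${\cal B}$ components do in $M$; it only needs that $M$ is present in the current $S$, which your induction already provides. Replacing your direct-simulation attempt for the $\phi$/$\langle i,\tau\rangle$ case with this \texttt{Prune}-priority contradiction would make your proof go through.
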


\begin{proof}
   We prove the lemma by induction on $r$.

   The case of $r=1$ is trivial. For the general case, we first observe that every network $N'$ which satisfies $\varepsilon(\pi_3,N',t) = \varepsilon(\pi_3,N,t)$, for $t = 0, 1, \ldots, r-1$, belongs to the subset $S$. So, we will prove that the network $M$ satisfies all those equations.

   Suppose that the lemma holds for $r=k$. Now let us consider the case of $r=k+1$. By the inductive hypothesis, we have that $\varepsilon(\pi_3,M,t) = \varepsilon(\pi_3,N,t)$, for $t = 0, 1, \ldots, k-1$, so we only need to show that $\varepsilon(\pi_3,M,3k) = \varepsilon(\pi_3,N,3k)$. There are three cases to consider:\\

  \noindent
  {\bf Case 1:} $\varepsilon(\pi_3,N,3k) = \rho$

   In this case, more than one node from $L_1$ transmit in round $3k-2$ when $\pi_3$ is
  executed on network $N$. Let $x$ and $y$ be the nodes chosen in step (3) of the marking
  procedure above, and suppose that $x$ and $y$ belong to the BGI-components $i$ and $j$,
  respectively.  
 
    We claim that $x$ and $y$ also transmit in round $3k-2$ when $\pi_3$ is executed
  on network $M$, which implies that $\varepsilon(\pi_3,M,3k) = \rho$. To check the claim, 
  observe that the BGI-component $i$ and $j$ have the same topologies in networks
  $N$ and $M$. Since the nodes in these components receive the same sequence of
  messages from the source up to round $3(k-1)$, they have identical behaviors in
  rounds $3k-2$ and $3k-1$ when $\pi_3$ is executed on networks $N$ and $M$. This
  follows from the definition of a deterministic broadcast protocol. In particular, nodes $x$
  and $y$ transmit in round $3k-2$ when $\pi_3$ is executed on network $M$.

  \noindent
  {\bf Case 2:} $\varepsilon(\pi_3,N,3k) = <i,\tau>$

  In this case, a single node from $L_1$ transmits in round $3k-2$ when $\pi_3$ is executed on network $N$. Let $x$ be such node, which, according to the message, belongs to the $i^{th}$ BGI-component.

  Now, we claim that $x$ is also the single node to transmit in round $3k-2$ when $\pi_3$ is executed on network $M$. Before proving the claim, we observe that the BGI-component $i$ is marked in the procedure above, and so it has exactly the same topology on networks $N$ and $M$. Thus, if the claim holds, we have $\varepsilon(\pi_3,M,3k) = <i,\tau>$, as required.

  The fact that node $x$ transmits in round $3*k-2$ when protocol $\pi_3$ is executed on network $M$ follows by the same argument given in case 1. Now, suppose that some other node in $L_1$ also transmits in the same round. Then, a collision would occur and the source would transmit $\rho$ in its next transmission round $3*k$. However, by the inductive hypothesis, the network $M$ survives up to the $(k+1)^{th}$ iteration of Prune(), and so in this iteration the network $N$ would be purged from set $S$, which is a contradiction. Hence, $x$ is the only node to transmit in round $3k-2$ when $\pi_3$ is executed on network $M$.

  \noindent
  {\bf Case 3:} $\varepsilon(\pi_3,N,3k) = \phi$

   In this case, there is no transmission in round $3k-2$ when protocol $\pi_3$ is executed on $N$. A similar argument to the one given in case 2 allows to conclude that there is also no transmission in round $3k-2$ when protocol $\pi_3$ is executed on network $M$. Hence, $\varepsilon(\pi_3,M,3k) = \phi$.
\end{proof}

   Finally, condition 2 easily follows from lemma \ref{lem:marking} by observing that, when $r < \sqrt{n}/2$, the set ${\cal B}$ has less than $\sqrt{n}$ components. Thus, if $i$ is a BGI-component which does not belong to ${\cal B}$, then every network $M$ which has exactly the same topology as that of network $N$ excepting the $i^{th}$ component, belongs to the subset $S$.

\noindent {\bf Acknowledgement(s):}
  We would like to thank Dariusz Kowalski and Andrzej Pelc for useful discussions and encouraging us to work on the lower bound. We thank Eli Gafni for some initial collaboration regarding this work, though he has dignifiedly chosen not to be a co-author. We also thank Gunes Ercal and Chen Avin for useful comments and suggestions about the presentation of an earlier draft of this work.


\end{document}